\newtheorem{theorem}{Theorem}[section]
\newtheorem{lemma}[theorem]{Lemma}
\newtheorem{example}[theorem]{Example}
\newtheorem{definition}[theorem]{Definition}
\newtheorem{proposition}[theorem]{Proposition}
\newenvironment{proof}{{\em Proof.}}{\medskip}
\newcommand{\RedG}{\Longrightarrow}
\newcommand{\dlsqb}{[\![}
\newcommand{\drsqb}{]\!]}
\newcommand{\T}{{\mathsf T}}
\newcommand{\R}{R}
\renewcommand{\S}{S}
\newcommand{\rulename}[1]{\text{\small[\textsc{#1}]}}
\newcommand{\pp}{{\sf p}}
\newcommand{\q}{\pq}
\newcommand{\pq}{{\sf q}}
\newcommand{\pr}{{\sf r}}
\newcommand{\ps}{{\sf s}}
\newcommand{\e}{\kf{e}}
\newcommand{\x}{x}
\newcommand{\val}{\kf{v}}
\newcommand{\valn}{\kf{n}}
\newcommand{\valr}{\kf{i}}
\newcommand{\sep}{\ensuremath{~~\mathbf{|\!\!|}~~ }}
\newcommand{\kf}[1]{\ensuremath{\mathsf{#1}}}
\newcommand{\pc}{\ensuremath{~|~}}
\newcommand{\G}{\ensuremath{{\sf G}}}
\newcommand{\Gvti}[5]{\ensuremath{#1\to#2:\{#3_i({#4}_i). #5_i \}_{i \in I}}}
\newcommand{\ty}{\textbf{t}}
\newcommand{\N}{\ensuremath{\mathcal M}}
\newcommand{\M}{\ensuremath{\mathcal M}}
\newcommand{\pa}[2]{#1 \triangleleft  #2}
\newcommand{\set}[1]{\{#1\}}
\newcommand{\eval}[2]{#1 \downarrow #2}
\newcommand{\true}{\kf{true}}
\newcommand{\false}{\kf{false}}
\newcommand{\myrule}[3]{\begin{prooftree} #1 \justifies   #2   \using{\rln{#3}} \end{prooftree}}
\newcommand{\rln}[1]{\textsc{#1}}
\newcommand{\der}[3]{ #1 \vdash   #2  :#3}
\newcommand{\CP}[1]{ {\mathcal P}(#1)}
\newcommand{\CG}[2]{ {\mathcal G}(#1,#2)}
\newcommand{\CGZ}[3]{ {\mathcal G}_0(#1,#2,#3)}
\newcommand{\CC}[1]{ {\mathcal C}[#1]}
\newcommand{\participant}[1]{\mathtt{pt}\{#1\}}
\newcommand{\proj}[2]{ #1 \upharpoonright #2}
\newcommand{\sub}[2]{\set{#1/#2}}
\newcommand{\procin}[3]{#1 ?  #2 .#3}
\newcommand{\procout}[3]{#1 ! #2. #3}
\newcommand{\PP}{\ensuremath{P}}
\newcommand{\Q}{\ensuremath{Q}}
\newcommand{\cond}[3]{\kf{if}~ #1 ~\kf{then} ~#2 ~\kf{else}~#3}
\newcommand{\inact}{\ensuremath{\mathbf{0}}}
\newcommand{\external}{+}
\newcommand{\tend}{\mathtt{end}}
\newcommand{\tbool}{\mathtt{bool}}
\newcommand{\tnat}{\mathtt{nat}}
\newcommand{\tint}{\mathtt{int}}
\newcommand{\tin}[3]{#1?#2(#3)}
\newcommand{\tout}[3]{#1!#2(#3)}
\newcommand{\tdag}[3]{#1\dagger #2(#3)}
\newcommand{\tddag}[3]{#1\ddagger #2 (#3)}
\newcommand{\tinternal}{\vee}
\newcommand{\texternal}{\wedge}
\renewcommand{\S}{S}
\newcommand{\lts}[1]{\xrightarrow{#1}}
\newcommand{\redG}[4]{#1\setminus#2\xrightarrow#3#4}
\newcommand{\SortI}[2]{ {#1}_{#2}}
\newcommand{\Econtext}{\mathcal{E}}
\newcommand{\subt}{\leqslant}
\newcommand{\subs}{\leq\vcentcolon}
\newcommand{\red}{\longrightarrow}
\newcommand{\nsubt}{\not\trianglelefteq}
\newcommand{\fsqrt}[1]{{\tt neg}(#1)}
\newcommand{\fneg}{\fsqrt}
\newcommand{\fsucc}[1]{{\tt succ}(#1)}
\newcommand{\stuck}[1]{{\tt stuck}(#1)}
\newcommand{\dual}[1]{\overline #1}
\newcommand{\cinferrule}[3][]{
  \mprset{fraction={===},
  fractionaboveskip=0.2ex,
  fractionbelowskip=0.4ex}
  \inferrule[#1]{#2}{#3}
}
\newcommand{\I}{\bigwedge\!\!\!\!\bigwedge}
\definecolor{ceca}{rgb}{1,0.5,0}
\newcommand{\SUB}[3]{{\mathcal S}(#1,#2,#3)}
\newcommand{\Gvtir}[5]{\ensuremath{#1\to#2:\{#3_i({#4}_i). \redG{#5_i}\pp\ell\q \}_{i \in I}}}
 \newcommand{\myformula}[1]{\\[5pt]\centerline{$#1$}\\[5pt]}
  \newcommand{\myformulaA}[1]{\centerline{$#1$}}
  \newcommand{\myformulaC}[1]{\\[3pt]\centerline{$#1$}}
 \newenvironment{mytable}[1]
               {\begin{table} [#1]}{\vspace{-3pt}
               \end{table}}
  \newcommand{\myparagraph}[1]{\vspace{-25pt}
  \paragraph{#1}}
   \newcommand{\mysection}[1]{\vspace{-7pt}
  \section{#1}\vspace{-7pt}}
   \newenvironment{mydefinition}
               {\begin{definition}\vspace{-4pt}
               }{\vspace{-2pt}
               \end{definition}}
               \newenvironment{mydefinitionA}
               {\begin{definition}
               }{
               \end{definition}}
                 \newenvironment{mytheorem}[1]
               {\begin{theorem}{\bf{(#1)}}\vspace{-2pt}
               }{\vspace{-3pt}
               \end{theorem}}
               \newenvironment{mytheoremA}[1]
               {\begin{theorem}{\bf{(#1)}}
               }{\vspace{-3pt}
               \end{theorem}}
               \newenvironment{mylemma}[1]
               {\begin{lemma}{\bf{(#1)}}\vspace{-2pt}
               }{\vspace{-3pt}
               \end{lemma}}
                \newenvironment{mylemmaB}[1]
               {\begin{lemma}{\bf{(#1)}}\vspace{-2pt}
               }{
               \end{lemma}}
                \newenvironment{mylemmaA}
               {\begin{lemma}\vspace{-1pt}
               }{\vspace{-3pt}
               \end{lemma}}
               \newenvironment{myitemize}
               {\begin{itemize}\vspace{-5pt}
               \topsep0pt\parskip0pt\partopsep0pt\itemsep0pt\leftmargin0pt\itemsep2pt\labelwidth0pt\labelsep3pt
              }
               {\vspace{-2pt}
               \end{itemize}}
                \newenvironment{myenumerate}
               {\begin{enumerate}
               \topsep0pt\parskip0pt\partopsep0pt\itemsep0pt\leftmargin0pt\itemsep2pt\labelwidth0pt\labelsep3pt
              }
               {\vspace{-2pt}
               \end{enumerate}}
               \newenvironment{myenumerateB}
               {\begin{enumerate}
               \topsep0pt\parskip0pt\partopsep0pt\itemsep0pt\leftmargin0pt\itemsep2pt\labelwidth0pt\labelsep3pt
              }
               {
               \end{enumerate}}
               \newenvironment{myexample}
               {\begin{example}\vspace{-2pt}
               }{\vspace{-2pt}
               \end{example}}
               \newenvironment{myexampleA}
               {\begin{example}\vspace{-2pt}
               }{\vspace{-20pt}
               \end{example}}
                \newenvironment{myexampleC}
               {\begin{example}
               }{\vspace{-2pt}
               \end{example}}
\begin{document}

%
\title{Precise subtyping for synchronous multiparty sessions
\thanks{Partly supported by COST IC1201 BETTY and DART bilateral 
project between Italy and Serbia.}}

\def\titlerunning{Precise subtyping for synchronous multiparty sessions}

\author{
  Mariangiola Dezani-Ciancaglini\institute{
  Universit\`a di Torino, Italy}\thanks{Partly supported by MIUR PRIN Project CINA Prot. 2010LHT4KM and Torino University/Compagnia San Paolo Project SALT.}
  \and
 Silvia Ghilezan\institute{
  Univerzitet u Novom Sadu, Serbia}
  \and 
  Svetlana Jak\v{s}i\'c\institute{
  Univerzitet u Novom Sadu, Serbia}
   \and
  Jovanka Pantovi\'c\institute{
  Univerzitet u Novom Sadu, Serbia}
  \and 
  Nobuko Yoshida\institute
{Imperial College London}\thanks{Partly supported by 
EPSRC EP/K011715/1, EP/K034413/1, and 
EP/L00058X/1, and EU Project FP7-612985 UpScale. 
}
}

%

\def\authorrunning{Dezani, Ghilezan, Jak\v{s}i\'c, Pantovi\'c, Yoshida}

\providecommand{\event}{PLACES 2015}

\maketitle

\begin{abstract}
  The notion of subtyping has gained an important role both in
theoretical and applicative domains: in lambda and concurrent calculi
as well as in programming languages. The soundness and the
completeness, together referred to as the preciseness of subtyping,
can be considered from two different points of view: 
operational and denotational. The former preciseness has been recently developed with
respect to type safety, i.e. the safe replacement of a term of a
smaller type when a term of a bigger type is expected. The latter preciseness is based on the denotation of a
type which is a mathematical object that describes the meaning of the
type in accordance with the denotations of other expressions from the
language. The result of
this paper is the operational and denotational preciseness of the
subtyping for a synchronous multiparty session calculus. The novelty
of this paper is the introduction of characteristic global types to
prove the operational completeness.
\end{abstract}

\mysection{Introduction}
In modelling distributed systems, where many processes interact by means of
message passing, one soon realises that most interactions are meant to
occur within the scope of private channels according to disciplined
protocols.
Following~\cite{HYC08}, we call such private interactions \emph{multiparty sessions} and the protocols
that describe them \emph{multiparty session types}.

The ability to describe complex interaction protocols by means of a
formal, simple and yet expressive type language can have a profound
impact on the way distributed systems are designed and developed. This
is witnessed by the fact that some important standardisation bodies
for web-based business and finance protocols \cite{CDL,UNIFI,savara}
have recently investigated design and implementation frameworks for
specifying message exchange rules and validating business logic based
on the notion of multiparty sessions, where multiparty session
types are ``shared agreements'' between teams of programmers
developing possibly large and complex distributed protocols or
software systems.

\emph{Subtyping} has been extensively studied 
as one of the most interesting issues in type theory. 
The correctness of subtyping relations has been usually provided as 
the operational soundness: If $\T$ is a subtype of $\T'$ (notation $\T\leq\T'$), then a term of type $\T$ may be provided whenever a term of type $\T'$ is needed, see~\cite{pier02} (Chapter 15) and~\cite{harp13} (Chapter 23).
The converse direction, the operational completeness,  
has been largely ignored 
in spite of its usefulness to 
define the greatest subtyping relation ensuring type safety. If $\dlsqb \T  \drsqb$ is the set interpretating type $\T$, then a subtyping is denotationally sound when $\T \leq \T' $ implies  $\dlsqb \T  \drsqb \subseteq  \dlsqb \T'  \drsqb$ and denotationally complete when $ \dlsqb \T  \drsqb \subseteq  \dlsqb \T'  \drsqb  $ implies  $ \T \leq \T'$. \emph{Preciseness} means both soundness and completeness.

Operational
preciseness has been first introduced in~\cite{BHLN12} for a
call-by-value  $\lambda$-calculus with sum, product and recursive
types. Both operational and denotational preciseness have been studied
in~\cite{DG14} for a  $\lambda$-calculus with choice and parallel
constructors~\cite{SIAM} and in~\cite{cdy14} for binary sessions~\cite{THK}. 

These facts ask for investigating precise subtyping for 
multiparty session types, the subject of this paper.
Subtyping for session calculi can be defined to assure safety of substitutability of either channels~\cite{GH05} or processes~\cite{DemangeonH11}. We claim that 
substitutability of processes better fits the notion of preciseness.

We show the operational and denotational preciseness of the
subtyping introduced in~\cite{DemangeonH11} for a simplification of the synchronous multiparty
session calculus in~\cite{KY13}. For the operational preciseness we take the view that well-typed sessions never get stuck. For the denotational preciseness we interpret a type as the set of processes having that type. 
 
The most technical challenge is the operational completeness, 
which requires a non trivial extension of the method used in the case of binary sessions. The core of this extension is 
the construction of {\em characteristic global types}.  
\paragraph{Outline} The calculus and its type system are introduced in Sections~\ref{sec:msc} and \ref{sec:ts}, respectively. Sections~\ref{sec:op} and~\ref{sec:denotation} contain the proofs of operational and denotational preciseness. Section~\ref{ex} illustrates the operational preciseness by means of an example. Some concluding remarks are the content of Section~\ref{conc}.

\mysection{Synchronous Multiparty Session Calculus}\label{sec:msc}

This section introduces syntax and semantics of a synchronous
multiparty session calculus.
Since our focus is on
subtyping, we simplify the calculus in \cite{KY13} 
eliminating both shared channels
for session initiations and session channels for communications inside
sessions. We conjecture the preciseness of the subtyping in~\cite{DemangeonH11} also for the full calculus, but we could not use the present approach for the proof, since well-typed interleaved sessions can be stuck~\cite{CDYP15}.

\noindent
\myparagraph{Syntax}
A \emph{multiparty session} is a series of 
interactions between a fixed number of
participants, possibly with branching and recursion, and serves as a unit
of abstraction for describing communication protocols.

We use the following base sets:  \emph{values}, ranged over by $\val,\val',\ldots$;
 \emph{expressions}, ranged over by $\e,\e',\ldots$;
\emph{expression variables}, ranged over by
$x,y,z\dots$; \emph{labels}, ranged over by $\ell,\ell',\dots$;
 \emph{session participants}, ranged over by $\pp,\pq,\ldots$;
\emph{process variables}, ranged over by $X,Y,\dots$; 
\emph{processes}, ranged over by $P,Q,\dots$; and \emph{multiparty sessions}, ranged over by $\N,\N',\dots$.

The values are natural numbers $\valn$, integers $\valr$, and boolean values $\true$ and $\false$.  The expressions $\e$ are variables or values or expressions built from expressions by applying the operators ${\tt succ}, {\tt neg}, \neg, \oplus,$ or the relation $>.$  An \emph{evaluation context} $\Econtext$ is an expression with exactly one hole, built in the same manner from expressions and the hole.

Processes $\PP$ are defined by:
\begin{myformula} {\begin{array}{lll}\PP   &   ::=  &  \procin{\pp}{\ell(\x)}{\PP}      \sep     \procout \pp  {\ell(\e)} \PP    
            \sep     \PP + \PP                                    
            \sep     \cond{\e} \PP  \PP                                     \sep     \mu X.\PP                                                \sep     X                                           
                   \sep     \inact  \end{array}}\end{myformula}
\indent    
The input process $\procin{\pp}{\ell(\x)}{\PP}$ waits for an expression with label $\ell$ from participant $\pp$ and
the output process $\procout{\q}{\ell(\e)}{\Q}$ sends the value of expression $\e$ with label $\ell$  to participant $\q$. 
The external choice $P+ Q$ offers to choose either $P$ or $Q$. The process
$\mu X.\PP$ is a recursive process. 
We take an equi-recursive view, not distinguishing
between a process $\mu X.\PP$ 
and its unfolding
$\PP\sub{\mu X.\PP}{X}$. 
 We assume that the recursive processes are guarded, i.e. $\mu X.X$ is not a process. 

A multiparty session $\N$ is a parallel composition of pairs (denoted by $\pa\pp\PP$) of participants and processes: 
\myformula{\begin{array}{lll}\N  &  ::=   & \pa\pp\PP  \sep \N \pc\N                       \end{array}}
We will use 
   $\sum\limits_{i\in I}  \PP_i$  as short for $\PP_1+\ldots+\PP_n,$ and
   $\prod\limits_{i\in I} \pa{\pp_i}{\PP_i}$ as short  for $\pa{\pp_1}{\PP_1}\pc \ldots\pc \pa{\pp_n}{\PP_n},$ 
where $I=\set{1,\ldots,n}$.

If $\pa\pp\PP$ is well typed (see Table~\ref{tab:sync:typing}), then participant $\pp$ does not occur in process $\PP$, since we do not allow self-communications.

\noindent \myparagraph{Operational semantics}
{\em The value $\val$ of expression $\e$} (notation $\eval\e\val$) is as expected, see Table~\ref{tab:evaluation}. The successor operation ${\tt succ}$ is defined only on  natural numbers, the negation ${\tt neg}$ is defined on integers (and then also on natural numbers), and $\neg$ is defined only on boolean values. The internal choice $\e_1\oplus\e_2$ evaluates either to the value of $\e_1$ or to the value of $\e_2$. 
\begin{mytable}{}
\centerline{$\begin{array}[t]{@{}c@{}}
\eval{\fsucc\valn}(\valn +1)
\quad
\eval{\fsqrt\valr}(-\valr)
\quad
\eval{\neg\true}\false \quad \eval{\neg\false}\true 
\quad \eval\val\val\\
\eval{(\valr_1>\valr_2)}{\begin{cases}
 \true     & \text{if }\valr_1>\valr_2, \\
  \false    & \text{otherwise}
\end{cases}} \qquad
\inferrule[]{\eval{\e_1}\val\text{ or }\eval{\e_2}\val}
{\eval{\e_1\oplus\e_2}{\val}}
\qquad
\inferrule[]{\eval{\e}{\val}\quad\eval{\Econtext(\val)}{\val'}}{\eval{\Econtext(\e)}{\val'}}
\end{array}
$}
\caption{\label{tab:evaluation}  Expression evaluation.}
\end{mytable}
 \begin{mytable}{}
\centerline{$
\begin{array}[t]{@{}c@{}}
\inferrule[\rulename{s-extch 1}]{}{
   \PP \external \Q \equiv \Q \external \PP
      }
\qquad
\inferrule[\rulename{s-extch 2}]{}{
   (\PP \external \Q ) \external \R\equiv \PP \external (\Q \external \R)
      }
\qquad
\inferrule[\rulename{s-multi}]{}{
   \PP \equiv \Q  \Rightarrow \pa\pp\PP\equiv\pa\pp\Q
      }
\\\\\      
  \inferrule[\rulename{s-par 1}]{}{
    \pa\pp{\inact} \pc \N \equiv \N
      }
\qquad
  \inferrule[\rulename{s-par 2}]{}{
    \N \pc \N' \equiv \N' \pc \N
      }
 \qquad
 \inferrule[\rulename{s-par 3}]{}{
    (\N \pc \N') \pc \N'' \equiv \N \pc( \N' \pc \N'')
      }
\end{array}
$}
\caption{Structural congruence.}
\label{tab:sync:congr}
\end{mytable}
\begin{mytable}{}
\myformulaA{
\begin{array}[ht]{@{}c@{}}
\inferrule[\rulename{r-comm}]{
     j \in I \qquad \eval{\e}{\val}}{
    \pa\pp{\sum\limits_{i\in I} \procin{\q}{\ell_i(\x)}{\PP_i}}\; \pc \;\pa\q{\procout \pp {\ell_j(\e)} \Q}
    \red
    \pa\pp{\PP_j}\sub{\val}{\x}\;\pc\;\pa\q\Q
    }
  \qquad
  \inferrule[\rulename{t-conditional}]{
   \eval{\e}{\true}}{
    \pa\pp{\cond{\e}{\PP}{\Q}}  \red \pa\pp\PP
   }
 \\ \\
  \inferrule[\rulename{f-conditional}]{
   \eval{\e}{\false}}{
    \pa\pp{\cond{\e}{\PP}{\Q}}  \red \pa\pp\Q
   }
  \qquad
 \inferrule[\rulename{r-context}]{
    \N \red \N'
  }
  { \CC{\N} \red \CC{\N'}
  }   
  \qquad
  \inferrule[\rulename{r-struct}]{
  \N'_1\equiv \N_1 \quad \N_1\red \N_2 \quad \N_2 \equiv \N'_2
  }
  { 
   \N'_1 \red \N'_2
  }
\end{array}
}
\caption{Reduction rules.}
\label{tab:sync:red}
\end{mytable}

The {\em computational rules of multiparty sessions} (Table~\ref{tab:sync:red}) are closed with respect to the structural congruence defined in Table~\ref{tab:sync:congr}  and the following reduction contexts:
\begin{myformula} {
  \CC{\cdot} ::= [\cdot] \sep \CC{\cdot}\pc\N  
}\end{myformula}
In rule \rulename{r-comm} participant $\q$ sends the value $\val$ choosing label $\ell_j$ to participant $\pp$ which offers inputs on all labels $\ell_i$ with $i\in I$. 
We use $\red^*$ with the standard meaning.

 In order to define the operational preciseness of subtyping it is crucial to formalise when a  multiparty session contains communications that will never be executed. 

\begin{mydefinition}
  A multiparty session $\M$ is {\em stuck} if $\M\not\equiv\pa\pp\inact$ and there is no multiparty session $\M'$ such that  $\M\red\M'.$  A multiparty session $\M$ gets {\em stuck}, notation $\stuck\M,$ if it reduces to a stuck multiparty session.
\end{mydefinition}

\mysection{Type System}\label{sec:ts}

This section introduces the type system, which is a simplification of that in~\cite{KY13} due to the new formulation of the calculus. 

\noindent \myparagraph{Types} {\em Sorts}  are ranged over by $\S$ and defined by:\qquad
$ \S    \quad   ::=   \quad                          \tnat \sep \tint \sep\tbool$

\noindent
{\em Global types} generated by:
\begin{myformula}{
\begin{tabular}{rclrrrclr}
 $\G$ & $::=$ & $\Gvti \pp\q \ell \S {\G}$  & \sep &  $\mu\ty. \G$ & \sep &$\ty$   &\sep &$\tend$
\end{tabular}}\end{myformula}
describe the whole conversation scenarios of multiparty sessions. {\em Session types} correspond to projections of global types on the individual participants. 
Inspired by~\cite{P11}, we use intersection and union types instead of standard branching and selection~\cite{HYC08} to take advantage from the subtyping induced by subset inclusion.
The grammar of session types, ranged over by $\T$, is then 
\begin{myformula}{
\begin{tabular}{rclrrrrrclr}
 $\T$ & $::=$ & $\bigwedge_{i\in I}\tin\pp{\ell_i}{\S_i}.\T_i$  &\sep & $\bigvee_{i\in I}\tout\q{\ell_i}{\S_i}.\T_i$ & \sep & $\mu\ty. \T$& \sep & $\ty$  &\sep &$\tend$
\end{tabular}}\end{myformula}
We require  that $\ell_i\not=\ell_j$ with $i\not=j$ and $i,j\in I$  and recursion to be guarded in both global and session types. Recursive types with the same regular tree are considered equal~\cite[Chapter 20, Section 2]{pier02}. In writing types we omit unnecessary brackets, intersections, unions and $\tend$. 

We extend the original definition of projection of global types onto participants~\cite{HYC08} in the line of~\cite{YDBH10}, but keeping the definition simpler than that of~\cite{YDBH10}. This generalisation is enough to project the characteristic global types of next Section. We use the partial operator $\I$ on session types. This operator applied to two identical  types gives one of them, applied to two intersection types with same sender and different labels gives their intersection and it is undefined otherwise, see Table~\ref{pro}. The same table gives 
the {\em projection} of the global type $\G$ onto the participant $\pr$, notation $\proj \G \pr$. This projection allows participants to receive different messages in different branches of global types. 
\begin{mytable}{} 
\centerline{$\begin{array}{c}\T\I\T'=\begin{cases}
 \T     & \text{if }\T=\T', \\
 \T\wedge\T'     & \text{if }\T=\bigwedge_{i\in I}\tin\pp{\ell_i}{\S_i}.\T_i \text{ and } \T'=\bigwedge_{j\in J}\tin\pp{\ell'_j}{\S'_j}.\T'_j \\
 &\text{ and } \ell_i\not=\ell_j'\text{ for all }i\in I, j\in J \\
 \text{undefined}     & \text{otherwise}.
\end{cases}\\[9mm]
\proj {\Gvti \pp\q \ell \S {\G}} \pr=\begin{cases}
\bigvee_{i\in I}\tout\q{\ell_i}{\S_i}. \proj{\G_i}\pr     & \text{if }\pr=\pp, \\
\bigwedge_{i\in I}\tin\pp{\ell_i}{\S_i}. \proj{\G_i}\pr     & \text{if }\pr=\q, \\
   \I_{i\in I}\proj{\G_i}\pr    & \text{if $\pr\not=\pp$, $\pr\not=\q$ and $\I_{i\in I}\proj{\G_i}\pr$ is defined}.
\end{cases}\\[7mm]
\proj {(\mu\ty.\G)}\pr=\begin{cases}
 \mu\ty.\proj{\G}\pr     & \text{if $\pr$ occurs in $\G$}, \\
 \tend     & \text{otherwise}.
\end{cases}\qquad\qquad \proj{\ty}\pr=\ty\qquad\qquad \proj{\tend}\pr=\tend
\end{array}
$}
\caption{Projection of global types onto participants.}
\label{pro}
\end{mytable}
\begin{myexampleA}
If $\G=\pp\to\q:\set{\ell_1(\tnat).\G_1, \ell_2(\tbool).\G_2}$, where\\ $\G_1=\q\to\pr:\ell_3(\tint)$ and $\G_2=\q\to\pr:\ell_5(\tnat)$ and $\pr\neq \pp$,  then
\myformula{\proj{\G}\pr=\proj{\G_1}\pr\I\proj{\G_2}\pr=\tin\q{\ell_3}{\tint}\I\tin\q{\ell_5}{\tnat}=\tin\q{\ell_3}{\tint}\wedge\tin\q{\ell_5}{\tnat}.}
\end{myexampleA} 
\noindent \myparagraph{Subtyping}
\begin{mytable}{h}
\centerline{$
\begin{array}{@{}c@{}}
\inferrule[\rulename{sub-end}]{}
  {\tend \subt \tend}\qquad
\cinferrule[\rulename{sub-in}]{
\forall i\in I: \quad \S_i'\subs \S_i \quad \T_i \subt\T_i' }{
  \bigwedge_{i\in I\cup J} \tin\pp{\ell_i}{\S_i}.\T_i \subt \bigwedge_{i\in I}\tin\pp{\ell_i}{\S_i'}.\T_i'
}
\qquad
\cinferrule[\rulename{sub-out}]{
 \forall i\in I: \quad \S_i \subs \S'_i  \quad  \T_i \subt \T'_i  
}{\bigvee_{i\in I}
  \tout\pp{\ell_i}{\S_i}.\T_i
  \subt\bigvee_{i\in I\cup J}
 \tout\pp{\ell_i}{\S'_i}.\T_i'
 }
\end{array}
$}
\caption{\label{tab:sync:subt} Subtyping rules.}
\end{mytable}
\begin{mytable}{}
\centerline{
$\SUB\Theta\T{\T'}=\left\{
\begin{array}{ll}
    \true                                                                                    & \!\!\!\!\!\! \text{ if }  \T\subt\T'\in\Theta \text{ or }  \T=\T'  \\
   \&_{i\in I}\SUB{\Theta\cup\set{\T\subt\T'}}{\T_i}{\T'_i}  & \!\!\!\!\!\!\text{ if } (\T=\bigwedge\limits_{i\in I\cup J} \tin\pp{\ell_i}{\S_i}.\T_i \text{ and }\T'=\bigwedge\limits_{i\in I}\tin\pp{\ell_i}{\S_i'}.\T_i'  \\
                                                                                                & \qquad \text{ and } \forall i\in I: \S_i'\subs \S_i) \text{ or }  \\
                                                                                                & \!\!\!\!\!\phantom{ \text{ if }} (\T=\bigvee\limits_{i\in I} \tout\pp{\ell_i}{\S_i}.\T_i \text{ and }\T'=\bigvee\limits_{i\in I\cup J}\tout\pp{\ell_i}{\S_i'}.\T_i' \\
                                                                                                 & \qquad \text{ and }\forall i\in I: \S_i\subs \S_i' )\\
  \false                                                                                     & \!\!\!\!\!\!\text{ otherwise }
\end{array}
\right.
$}
\caption{\label{ds} The procedure $\SUB\Theta\T{\T'}$.}
\end{mytable}
{\em Subsorting }$\subs$ on sorts is the minimal reflexive and transitive closure of the relation induced by the rule:
  $\tnat \subs \tint$.
{\em Subtyping} 
$\subt$ on session types takes into account the contra-variance of inputs, the covariance of outputs, and the standard rules for intersection and union.  Table~\ref{tab:sync:subt} gives the subtyping rules: the double line in rules indicates
that the rules are interpreted {\em coinductively}~\cite{pier02} (Chapter 21). Subtyping can be easily decided, see for example~\cite{GH05}. For reader convenience Table~\ref{ds} gives the procedure $\SUB\Theta\T{\T'}$, where $\Theta$ is a set of subtyping judgments. This procedure terminates since unfolding of session types generates regular trees, so $\Theta$ cannot  grow indefinitely and we have only a finite number of subtyping judgments to consider. 
Clearly $\SUB\emptyset\T{\T'}$ is equivalent to $\T\subt\T'$.

\noindent \myparagraph{Typing system}
We distinguish three kinds of typing judgments
\begin{myformula}{
  \der\Gamma\e\S \qquad\qquad \der\Gamma\PP\T \qquad\qquad \der{}\N\G,
}\end{myformula}
where $\Gamma$ is the environment $\Gamma ::= \emptyset \sep \Gamma, x:\S \sep \Gamma, X:\T$ that associates expression variables with sorts and process variables with session types. The typing rules for expressions are standard, see Table~\ref{tab:te}. 
\begin{mytable}{t}
\centerline{$
\begin{array}{c}
  \der{\Gamma}{\valn}{\tnat} \qquad \der{\Gamma}{\valr}{\tint}
  \qquad
  \der{\Gamma}{\true}{\tbool} \qquad \der{\Gamma}{\false}{\tbool}
  \qquad
  \der{\Gamma,x:\S}{x}{\S}  
  \\ \\
  \myrule{\der{\Gamma}{\e}{\tnat}}{\der{\Gamma}{\fsucc\e}{\tnat}}{}
  \qquad
  \myrule{\der{\Gamma}{\e}{\tint}}{\der{\Gamma}{\fsqrt\e}{\tint}}{}
  \qquad
  \myrule{\der{\Gamma}{\e}{\tbool}}{\der{\Gamma}{\neg\e}{\tbool}}{}
  \\ \\
  \myrule{\der{\Gamma}{\e_1}{\S}\quad\der{\Gamma}{\e_2}{\S}}{\der{\Gamma}{\e_1\oplus\e_2}{\S}}{}
  \qquad
  \myrule{\der{\Gamma}{\e_1}{\tint}\quad\der{\Gamma}{\e_2}{\tint}}{\der{\Gamma}{\e_1>\e_2}{\tbool}}{}
  \qquad
  \myrule{\der{\Gamma}{\e}{\S}\quad\S\subs\S'}{\der{\Gamma}{\e}{\S'}}{}
  \end{array}
  $}
  \caption{Typing rules for expressions.}
\label{tab:te}
  \end{mytable}
Table~\ref{tab:sync:typing} gives the typing rules for processes and multiparty sessions. Processes are typed as expected, the syntax of session types only allows input processes in external choices and output processes in the branches of conditionals. 
We need to assure that processes in external choices offer different labels. For this reason rule \rln{[t-in-choice]} types both inputs and external choices. With two separate rules:
\begin{myformula}{\myrule{\der{\Gamma,x:\S}{\PP}\T}{\der\Gamma{{\procin  \q   {\ell(\x)} \PP}}{\q?\ell(\S).\T}}{[t-in]}
   \qquad
  \myrule{\der{\Gamma}{\PP_1}{\T_1}~~\der{\Gamma}{\PP_2}{\T_2}}{\der\Gamma{{{\PP_1}+{\PP_2}}}{\T_1\wedge \T_2}}{[t-choice]}}\end{myformula}
we could derive \begin{myformula}{\der{}{\procin  \q   {\ell_1(\x)} \inact+\procin  \q   {\ell_2(\x)} \inact+\procin  \q   {\ell_1(\x)} {\procout \q {\ell_5(\true)}\inact}}{\q?\ell_2(\tint).\tend\wedge\q?\ell_1(\tint).\q!\ell_5(\tbool).\tend}.}\end{myformula}
\indent
In order to type a session,  rule  \rln{[t-sess]} requires that the processes in parallel can play as participants of a whole communication protocol or the terminated process, i.e. their types are projections of a unique global type. We define the set $\participant\G$ of participants of a global type $\G$ as follows:
\begin{myformula}{
\begin{array}{c}
 \participant{\Gvti \pp\q \ell \S {\G}}=\{\pp,\q\}\cup\participant{\G_i} ~(i\in I)\footnotemark\\
 \participant{\mu\ty. \G}=\participant{\G} 
\qquad
  \participant\ty=\emptyset \qquad \participant\tend=\emptyset \end{array}
}\end{myformula}
  The condition $\participant \G\subseteq\{\pp_i\mid i\in I\}$ allows to type also sessions containing $\pa\pp\inact$, a property needed to assure invariance of types under structural congruence.\footnotetext{The projectability of $\G$ assures $\participant{\G_i}=\participant{\G_j}$ for all $i,j\in I$.}  
  \begin{mytable}{}
\centerline{$
\begin{array}{c}
  \myrule{\forall i\in I\quad\der{\Gamma,x:\S_i}{\PP_i}\T_i}{\der\Gamma{\sum\limits_{i\in I}{\procin  \q   {\ell_i(\x)} \PP_i}}{\bigwedge_{i\in I}\q?\ell_i(\S_i).\T_i}}{[t-in-choice]}
   \qquad
   \der\Gamma{ \inact}\tend~~\rln{[t-$\inact$]}
  \\\\
  \myrule{\der\Gamma\e\S~~\  \der{\Gamma}{\PP}\T}{\der\Gamma{{\procout  \q   {\ell(\e)} \PP}}{\q!\ell(\S).\T}}{[t-out]}
   \qquad
   \myrule{
  \der\Gamma\e\tbool~~\der{\Gamma}{\PP_1}{\T_1}~~\der{\Gamma}{\PP_2}{\T_2}}{\der\Gamma{{\cond{\e}{\PP_1}{\PP_2}}}{\T_1\vee \T_2}}{[t-cond]}
  \\\\
  \myrule{\der{\Gamma,X:\T}{\PP}{\T}}
  {\der\Gamma{\mu X.\PP}\T}{[t-rec]}
  \quad
   \der{\Gamma,X:\T}{X}{\T}~~\rln{[t-var]}  
       \quad
  \myrule{\der\Gamma{\PP}\T \qquad \T\subt \T'}
   {\der\Gamma{\PP}\T'}{[t-sub]}
  \\\\
  \myrule{\forall i\in I\quad\der{}{\PP_i}{\proj\G{\pp_i}}  
    \quad \participant \G\subseteq\{\pp_i\mid i\in I\}}
  {\der{}{\prod\limits_{i\in I}\pa {\pp_i}\PP_i}\G
  }{[t-sess]}
    \end{array}
$}
\caption{\label{tab:sync:typing} Typing rules for processes and sessions.}
\end{mytable}
 
 \smallskip
 
The proposed type system for multiparty sessions enjoys type preservation under reduction (subject reduction) and the safety property that a typed multiparty session will never get stuck. The remaining of this section is devoted to the proof of these properties. 

\smallskip

As usual we start with an inversion and a substitution lemmas.

\begin{mylemma}{Inversion lemma}\label{lem:Inv_S}
 \begin{myenumerate}
\item
Let $\der{\Gamma}{\PP}{\T}$.
\begin{myenumerate}\label{lem:Inv_S1}
\item\label{lem:Inv_S1a} 
If $\PP = \sum\limits_{i\in I}\procin  {\pp_i}   {\ell_i(\x)} {\Q_i}$, then
$ \bigwedge_{i\in I}\procin {\pp_i} {\ell_i{(\S_i)}}{\T_i}\subt\T $ and $ \der{\Gamma, x:\S_i}{ \Q_i}{\T_i} $.
\item\label{lem:Inv_S1b} If $\PP = \procout  \pp   {\ell(\e)} \Q$, then
$\procout \pp {\ell{(\S)}}{\T'}\subt\T$ and $ \der{\Gamma}{\e}{\S}$ and  $\der{\Gamma}{ \Q}{\T'} $.
\item \label{lem:Inv_S1d}
If $\PP = \cond{\e}{\Q_1}{\Q_2} $, then
$\T_1\vee\T_2\subt\T$ and
  $\der{\Gamma}{\Q_1}{\T_1}$ and $\der{\Gamma}{\Q_2}{\T_2}$.
\item  If $\PP = \mu X.\Q $, then $ \der{\Gamma, X:\T}{\Q}{\T} $.
\item  
If $\PP = X$, then $\Gamma = \Gamma', X:\T'$ and $\T'\subt\T$.
\item  
If $\PP = \inact$, then $\T = \tend$.
\end{myenumerate}
\item\label{lem:Inv_S2} If $\der{} {\prod\limits_{i\in I}\pa{\pp_i}{\PP_i}}\G$, then 
$\der{}{\PP_i}{\proj\G{\pp_i}}$ for all $i\in I$ and $\participant\G\subseteq\set{\pp_i\mid i \in I}$.
\end{myenumerate}
\end{mylemma}
\begin{proof}
By induction on type derivations.
\end{proof}

\begin{mylemma}{Substitution lemma}\label{lem:Subst_S}
If 
$\der{\Gamma, x:\S}{\PP}{\T} $ and $\der{\Gamma}{\val}{\S}$, then 
$\der{\Gamma}{\PP \sub{\val}{\x}}{\T}$.
\end{mylemma}
\begin{proof}
By structural induction on $\PP$.
\end{proof}

In order to state subject reduction we need to formalise how global types are modified by reducing multiparty sessions. 

\begin{mydefinition} 
\begin{enumerate}
\item The {\em consumption} of the communication $\pp\lts{\ell}\q$ for the global type $\G$ (notation $\redG\G\pp\ell\q$) is the global type inductively defined by: 
\begin{myformula}{\begin{array}{c}\redG{(\Gvti  \pr\ps \ell \S {\G})}\pp\ell\q=\begin{cases}
  \G_{i_0}    & \text{if }\pr=\pp, \ps=\q, \ell_{i_0}=\ell\\
   \Gvtir  \pr \ps \ell \S{\G}   & \text{otherwise}
\end{cases}\\\\
\redG{(\mu\ty.\G)}\pp\ell\q= \mu\ty.\redG{\G}\pp\ell\q
\end{array}}\end{myformula}
\item
The reduction of global types is the smallest pre-order relation closed under the rule:
\begin{myformulaC}{\G\RedG\redG\G\pp{\ell}\q}\end{myformulaC}
\end{enumerate}
\end{mydefinition}
\vspace{-10pt}
Notice that $\redG\tend\pp\ell\q$ and $\redG\ty\pp\ell\q$ are undefined. 
It is easy to verify that, if $\G$ is projectable and $\redG\G\pp\ell\q$ is defined, then the global type $\redG\G\pp\ell\q$ is projectable. The following lemma shows other properties of consumption that are essential in the proof of subject reduction.
\begin{lemma}\label{lem:erase}
 If $\tout\q\ell{\S}.{\T}\leq\proj\G\pp$ and $\tin\pp\ell{\S}.{\T'}\wedge\T''\leq\proj\G\q$, then $\T\leq\proj{(\redG\G\pp\ell\q)}\pp$ and\\ $\T'\leq\proj{(\redG\G\pp\ell\q)}\q$. Moreover $\proj\G\pr=\proj{(\redG\G\pp\ell\q)}\pr$ for $\pr\not=\pp$, $\pr\not=\q$.
\end{lemma}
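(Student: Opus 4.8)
The plan is to argue by induction on the structure of $\G$, using that $\redG\G\pp\ell\q$ is defined: this rules out $\G=\tend$ and $\G=\ty$, and the case $\G=\mu\ty.\G'$ is handled by pushing $\redG\cdot\pp\ell\q$ and $\proj\cdot{\pr'}$ through the recursion and invoking the induction hypothesis on $\G'$, so I may assume $\G=\Gvti\pr\ps\ell\S\G$. Before the case split I would record a role-clash observation coming from the hypotheses and the projection table: since $\tout\q\ell\S.\T\subt\proj\G\pp$ and \rulename{sub-out} relates only outputs with the same subject, $\proj\G\pp$ must be an output towards $\q$; dually $\proj\G\q$ must be an input from $\pp$. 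Reading off the projection clauses, this forces either $\pr=\pp$ and $\ps=\q$, or $\pp,\q\notin\{\pr,\ps\}$; every mixed case (e.g. $\pp=\ps$, or $\q=\ps$ with $\pr\neq\pp$) would make one of the two projections an input/output with the wrong subject, which is incomparable under $\subt$.

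In the base case $\pr=\pp$, $\ps=\q$ we have $\redG\G\pp\ell\q=\G_{i_0}$ with $\ell_{i_0}=\ell$, and $\proj\G\pp=\bigvee_{i\in I}\tout\q{\ell_i}{\S_i}.\proj{\G_i}\pp$, $\proj\G\q=\bigwedge_{i\in I}\tin\pp{\ell_i}{\S_i}.\proj{\G_i}\q$. Inverting \rulename{sub-out} against the first hypothesis selects the index $i_0$ and yields $\T\subt\proj{\G_{i_0}}\pp$; inverting \rulename{sub-in} against the second yields $\T'\subt\proj{\G_{i_0}}\q$. Since $\proj{(\redG\G\pp\ell\q)}\pp=\proj{\G_{i_0}}\pp$ and $\proj{(\redG\G\pp\ell\q)}\q=\proj{\G_{i_0}}\q$, the two principal claims follow immediately.

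In the inductive case $\pp,\q\notin\{\pr,\ps\}$ we have $\redG\G\pp\ell\q=\Gvtir\pr\ps\ell\S\G$, $\proj\G\pp=\I_{i\in I}\proj{\G_i}\pp$ and $\proj\G\q=\I_{i\in I}\proj{\G_i}\q$. Because $\I$ on outputs is defined only on identical operands, the first merge being an output forces all $\proj{\G_i}\pp$ equal, whence $\tout\q\ell\S.\T\subt\proj{\G_i}\pp$ for every $i$. For $\q$ the merge is either a collapse (all branches equal) or a genuine intersection of same-sender inputs, and in both cases I can isolate, for each $i$, an input type $\tin\pp\ell\S.\T'\wedge\T''_i\subt\proj{\G_i}\q$ carrying the $\ell$-branch. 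Applying the induction hypothesis to each $\G_i$ and recombining the results with $\I$ — legitimate because $\redG\G\pp\ell\q$ is projectable, so the merges on the reduced type are defined — reassembles $\proj{(\redG\G\pp\ell\q)}\pp$ and $\proj{(\redG\G\pp\ell\q)}\q$ and delivers the two inequalities.

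The ``moreover'' clause is where I expect the real difficulty. For $\pr'\neq\pp,\q$ one must compare $\proj\G{\pr'}$ with $\proj{(\redG\G\pp\ell\q)}{\pr'}$. In the inductive case this descends branchwise to the induction hypothesis, since projection onto $\pr'$ commutes with the $\I$ rebuilt over the consumed branches. The delicate point is the base case, where $\proj\G{\pr'}=\I_{i\in I}\proj{\G_i}{\pr'}$ must be matched against the single surviving branch $\proj{\G_{i_0}}{\pr'}$. This is exactly the place where consumption meets the generalised projection that merges a third party's differing behaviours across the branches of $\pp\to\q$: when those behaviours coincide the merge collapses to $\proj{\G_{i_0}}{\pr'}$ and the identity is immediate, whereas when they genuinely differ the merge is a proper intersection and the comparison is governed by \rulename{sub-in} rather than by literal equality. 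Pinning down this step — showing precisely how $\I$ is preserved (or relaxed) under $\redG\cdot\pp\ell\q$ — is the crux of the proof and calls for a careful analysis of the shape of $\proj{\G_i}{\pr'}$ across $i\in I$.
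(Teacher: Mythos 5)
Your skeleton coincides with the paper's proof: the same structural induction on $\G$ (with $\tend$, $\ty$ excluded because consumption is undefined and $\mu$ handled by unfolding), the same preliminary observation that the two subtyping hypotheses force $\G=\Gvti{\ps_1}{\ps_2}\ell\S\G$ with either $\ps_1=\pp,\ps_2=\q$ or $\set{\ps_1,\ps_2}\cap\set{\pp,\q}=\emptyset$, the same inversion of \rulename{sub-out} and \rulename{sub-in} to pick out the branch $i_0$ with $\ell_{i_0}=\ell$ in the base case, and the same branchwise appeal to the induction hypothesis in the disjoint case (where the paper additionally splits on $\pr=\ps_1$, $\pr=\ps_2$, $\pr\notin\set{\ps_1,\ps_2}$ to reassemble the projections). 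Your explicit argument that the merge at $\pp$ must collapse because $\I$ only produces intersections of inputs is sound and is in fact more careful than the paper, which simply writes $\proj\G\pp=\proj{\G_{i_0}}\pp$ ``for an arbitrary $i_0$''.

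The gap is that you stop at the ``moreover'' clause in the base case: you describe the difficulty but do not resolve it, and this step cannot be left open since it is exactly what Theorem~\ref{thm:SR} needs for the spectator participants. The paper dispatches it in one line by asserting that for $\pr\notin\set{\pp,\q}$ one has $\proj\G\pr=\proj{\G_{i_0}}\pr$ for an arbitrary $i_0\in I$, i.e.\ it treats the merge as collapsing, after which equality with $\proj{(\redG\G\pp\ell\q)}\pr=\proj{\G_{i_0}}\pr$ is immediate. Your hesitation here is not misplaced: with the generalised projection the merge need not collapse --- in Example~\ref{e1}-style situations such as $\G_1=\q\to\pr:\ell_3(\tint)$, $\G_2=\q\to\pr:\ell_5(\tnat)$ one gets $\proj\G\pr=\tin\q{\ell_3}{\tint}\wedge\tin\q{\ell_5}{\tnat}$ while the consumed global type projects onto $\pr$ as the single conjunct $\tin\q{\ell_3}{\tint}$, so literal equality fails. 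What does survive, and what the subject-reduction argument actually uses, is $\proj\G\pr\subt\proj{(\redG\G\pp\ell\q)}\pr$, which holds by \rulename{sub-in} because the projection after consumption is obtained by dropping conjuncts from the merge. To close your proof you should either make explicit the assumption that the merge collapses (as the paper tacitly does) or weaken the third clause to this subtyping inequality and verify that subsumption then still yields $\der{}{\Q_l}{\proj{\G'}{\pp_l}}$ in Theorem~\ref{thm:SR}; as written, your proposal names the crux but does not supply the argument.
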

\begin{proof} By induction on $\G$ and by cases on the definition of $\redG\G\pp\ell\q$.  Notice that    $\G$ can only be $\Gvti {\ps_1} {\ps_2} \ell \S {\G}$ with either $\ps_1=\pp$ and $\ps_2=\pq$ or $\set{\ps_1,\ps_2}\cap\set{\pp,\q}=\emptyset$, since otherwise the types in the statement of the lemma could not be subtypes of the given projections of $\G$. 

 If $\G=\Gvti  \pp \q \ell \S {\G}$, then $\proj\G\pp=\bigvee_{i\in I}\tout \q{\ell_i}{\SortI{\S}{i}}.\proj{\G_i}\pp$  and 
            $\proj\G\q=\bigwedge_{i\in I}\tin\pp{\ell_i}{\SortI{\S}{i}}.\proj{\G_i}\q$. From $\tout\q\ell{\S}.{\T}\leq\bigvee_{i\in I}\tout \q{\ell_i}{\SortI{\S}{i}}.\proj{\G_i}\pp$ we get $\ell=\ell_{i_0}$ and $\T\leq\proj{\G_{i_0}}\pp$ for some $i_0\in I$. From $\tin\pp\ell{\S}.{\T'}\wedge\T''\leq\bigwedge_{i\in I}\tin\pp{\ell_i}{\SortI{\S}{i}}.\proj{\G_i}\q$ and $\ell=\ell_{i_0}$ we get $\T'\leq\proj{\G_{i_0}}\q$. We get $\T\leq\proj{(\redG\G\pp\ell\q)}\pp$ and $\T'\leq\proj{(\redG\G\pp\ell\q)}\q$, since $\proj{(\redG\G\pp\ell\q)}\pp=\proj{\G_{i_0}}\pp$ and $\proj{(\redG\G\pp\ell\q)}\q=\proj{\G_{i_0}}\q$. If $\pr\not=\pp$, $\pr\not=\q$, then by definition of projection $\proj\G\pr=\proj{\G_{i_0}}\pr$ for an arbitrary $i_0\in I$, and then $\proj\G\pr=\proj{(\redG\G\pp\ell\q)}\pr$ by definition of consumption.

 If  $\G=\Gvti  {\ps_1}{\ps_2} \ell \S {\G}$ and $\set{\ps_1,\ps_2}\cap\set{\pp,\q}=\emptyset$, then $\proj\G\pp=\proj{\G_{i_0}}\pp$  and 
            $\proj\G\q=\proj{\G_{i_0}}\q$ for an arbitrary $i_0\in I$. By  definition of consumption \begin{myformula}{\redG\G\pp\ell\q=\Gvtir  {\ps_1}{\ps_2 }\ell \S {\G},}\end{myformula}which implies $\proj{(\redG\G\pp\ell\q)}\pp=\proj{(\redG{\G_{i_0}}\pp\ell\q)}\pp$ and $\proj{(\redG\G\pp\ell\q)}\q=\proj{(\redG{\G_{i_0}}\pp\ell\q)}\q$. Notice that the choice of $i_0$ does not modify the projection, by definition of projectability.  We get $\tout\q\ell{\S}.{\T}\leq\proj{\G_{i_0}}\pp$ and $\tin\pp\ell{\S}.{\T'}\wedge\T''\leq\proj{\G_{i_0}}\q$, which imply by induction $\T\leq\proj{(\redG{\G_{i_0}}\pp\ell\q)}\pp$ and $\T'\leq\proj{(\redG{\G_{i_0}}\pp\ell\q)}\q$.\\ If $\pr=\ps_1$, then $\proj\G\pr=\bigvee_{i\in I}\tout {\ps_2}{\ell_i}{\SortI{\S}{i}}.\proj{\G_i}\pr$ and\\ \centerline{$\proj{(\redG\G\pp\ell\q)}\pr=\bigvee_{i\in I}\tout {\ps_2}{\ell_i}{\SortI{\S}{i}}.\proj{(\redG{\G_i}\pp\ell\q)}\pr$,} so we conclude since by induction $\proj{\G_i}\pr=\proj{(\redG{\G_i}\pp\ell\q)}\pr$ for all $i\in I$.\\
            If $\pr=\ps_2$, then $\proj\G\pr=\bigwedge_{i\in I}\tin {\ps_1}{\ell_i}{\SortI{\S}{i}}.\proj{\G_i}\pr$ and\\ \centerline{$\proj{(\redG\G\pp\ell\q)}\pr=\bigwedge_{i\in I}\tin {\ps_1}{\ell_i}{\SortI{\S}{i}}.\proj{(\redG{\G_i}\pp\ell\q)}\pr$,} so we conclude since by induction $\proj{\G_i}\pr=\proj{(\redG{\G_i}\pp\ell\q)}\pr$ for all $i\in I$.\\
            If $\pr\not\in\set{\ps_1,\ps_2}$, then $\proj\G\pr=\proj{\G_{i_0}}\pr$ and $\proj{(\redG\G\pp\ell\q)}\pr=\proj{(\redG{\G_{i_0}}\pp\ell\q)}\pr$ for an arbitrary $i_0\in I$. We can conclude using induction.  
\end{proof}

We can now prove subject reduction.
  
\begin{mytheorem}{Subject reduction}\label{thm:SR}
  If $\der{}\N\G$ and $\N \red^* \N'$, then $\der{}{\N'}{\G'}$ for some $\G'$ such that $\G\RedG\G'$.
\end{mytheorem}

\begin{proof}
 By induction on the multiparty session reduction. We only consider the case of rule 
$\rulename{r-comm}$ as premise of rule $\rulename{r-context}$.  In this case \myformula{\N\equiv \pa\pp{\sum\limits_{i\in I} \procin{\q}{\ell_i(\x)}{\PP_i}}\; \pc \;\pa\q{\procout \pp {\ell_j(\e)} \PP}\pc\prod\limits_{l\in L} \pa{\pp_l}{\Q_l}} and  
\myformulaA{\N'\equiv
    \pa\pp{\PP_j}\sub{\val}{\x}\;\pc\;\pa\q\PP\pc\prod\limits_{l\in L} \pa{\pp_l}{\Q_l},} where $j\in I$, $\eval{\e}{\val}$. By Lemma~\ref{lem:Inv_S}(\ref{lem:Inv_S2}) $\der{}\N\G$ implies $\der{}{\sum\limits_{i\in I} \procin{\q}{\ell_i(\x)}{\PP_i}}{\proj\G\pp}$, and $\der{}{\procout \pp {\ell_j(\e)} \PP}{\proj\G\q}$, and $\der{}{\Q_l}{\proj\G{\pp_l}}$ for $l\in L$. By Lemma~\ref{lem:Inv_S}($\ref{lem:Inv_S1a}$) $\bigwedge_{i\in I}\tin \q {\ell_i}{\S_i}.{\T_i}\subt\proj\G\pp$ and $ \der{ x:\S_i}{ \PP_i}{\T_i} $ for $i\in I$. By Lemma~\ref{lem:Inv_S}($\ref{lem:Inv_S1b}$) $\tout \pp {\ell_j}{\S}.{\T}\subt\proj\G\q$ and $ \der{}{ \e}{\S} $ and $ \der{}{ \PP}{\T} $. 
 From $\bigwedge_{i\in I}\tin \q {\ell_i}{\S_i}.{\T_i}\subt\proj\G\pp$ and $\tout \pp {\ell_j}{\S}.{\T}\subt\proj\G\q$ we get $\S_j=\S$. By Lemma~\ref{lem:Subst_S} $ \der{ x:\S}{ \PP_j}{\T_j} $ and $ \der{}{ \e}{\S} $ and $\eval{\e}{\val}$ imply 
 $ \der{}{ \PP_j\sub\val\x}{\T_j} $. Then we choose $\G'=\redG\G\pp{{\ell_j}}\q$, since Lemma~\ref{lem:erase} gives $\T_j\leq\proj{(\redG\G\pp{{\ell_j}}\q)}\pp$ and $\T\leq\proj{(\redG\G\pp{{\ell_j}}\q)}\q$ and the same projections for all other participants of $\G$.
\end{proof}

To show progress a lemma on canonical forms is handy. The proof easily follows from the inspection of the typing rules. 

\begin{mylemmaB}{Canonical forms}\label{cf}
\begin{myenumerateB}
\item If $\der{}\PP{\bigwedge_{i\in I}\tin \pp {\ell_i}{\S_i}.{\T_i}}$, then $\PP=\sum\limits_{i\in I'} \procin{\pp}{\ell_i(\x)}{\PP_i}$ with $I\subseteq I'$.
\item If $\der{}\PP{\bigvee_{i\in I}\tout \pp {\ell_i}{\S_i}.{\T_i}}$, then $\pa\pq\PP\red^*\pa\pq{\procout \pp {\ell_j(\e)} \Q}$ with $j\in I$. 
\end{myenumerateB}
\end{mylemmaB}

\begin{mytheoremA}{Progress}\label{thm:progress}
  If $\der{}\N\G$, then  either $\N\equiv\pa\pp\inact$ or $\N\red\N'$. 
\end{mytheoremA}
\begin{proof} If $\G=\tend$, then $\N\equiv\pa\pp\inact$ by Lemma~\ref{lem:Inv_S}(\ref{lem:Inv_S2}). If $\G=\Gvti  \pp\pq \ell \S {\G}$, then\\ \centerline{$\N\equiv\pa\pp\PP\pc\pa\q\Q\pc\N''$} and $\der{}\PP{\bigvee_{i\in I}\tout \pq {\ell_i}{\S_i}.{\proj{\G_i}\pp}}$ and $\der{}\Q{\bigwedge_{i\in I}\tin \pp {\ell_i}{\S_i}.{\proj{\G_i}\q}}$ again by Lemma~\ref{lem:Inv_S}(\ref{lem:Inv_S2}). By Lemma~\ref{cf} $\PP=\sum\limits_{i\in I'} \procin{\pp}{\ell_i(\x)}{\PP_i}$ with $I\subseteq I'$ and $\pa\pq\Q\red^*\pa\pq{\procout \pp {\ell_j(\e)}{ \Q'}}$ with $j\in I$. Therefore,  if $\eval\e\val$, then  $\N\red^*\pa\pp\PP\pc\pa\q{\procout \pp {\ell_j(\e)}{ \Q'}}\pc\N''\red\pa\pp{\PP_j\sub\val\x}\pc\pa\q{\Q'}\pc\N''$. 
\end{proof}

The safety property that a typed multiparty session will never get stuck is a consequence of subject reduction and progress.

\begin{mytheorem}{Safety}\label{thm:Safety}
  If $\der{}\N\G$, then it does not hold $\stuck{\M}.$ 
  \end{mytheorem}

\mysection{Operational Preciseness}\label{sec:op}

We adapt the notion of operational preciseness~\cite{BHLN12,cdy14,DG14} to our calculus. 
\begin{mydefinitionA}\label{def:preciseness}
  A subtyping relation is {\em operationally precise} if  for any two types $\T$ and $\T'$ the following equivalence holds:
  \begin{center}
$\T \subt \T'$ if and only if there are no $\PP,\pp,\M$ such that:\\[3pt]
$\bullet~\der{}{\PP}{\T}$; and
$\qquad\bullet~\der{} {\Q}{\T'}$ implies $\vdash\pa\pp \Q\pc\M $; and
$\qquad\bullet~\stuck{\pa\pp{\PP\pc\M}}$.
    \end{center}
\end{mydefinitionA}  
  
  The {\em operational soundness}, i.e. if  for all $\Q$ such that $\der{}{\Q}{\T'}$ implies $\vdash\pa\pp \Q\pc\M$, then $\pa\pp{\PP\pc\M}$ is not stuck, follows from the subsumption rule \rln{[t-sub]} and the safety theorem, Theorem~\ref{thm:Safety}. 

To show the vice versa, it is handy to define the set $\participant\T$ of participants of a session type $\T$ as follows
\begin{myformula}{
\begin{array}{c}
 \participant{\bigwedge_{i\in I}\tin\pp{\ell_i}{\S_i}. \T_i}= \participant{\bigvee_{i\in I}\tout\pp{\ell_i}{\S_i}.\T_i}=\{\pp\}\cup\bigcup_{i\in I}\participant{\T_i}\\
 \participant{\mu\ty. \T}=\participant{\T}\qquad
   \participant\ty=\participant\tend=\emptyset 
\end{array}
}\end{myformula}
The proof of {\em operational completeness} comes in four  steps. 

\begin{myitemize}
\item {\bf [Step~1]} \ We characterise the negation of the subtyping relation 
by inductive rules (notation $\nsubt$). 

\item {\bf [Step~2]} \ For each type $\T$ and participant $\pp\not\in\participant\T$, we 
define a {\em characteristic global type} $\CG\T\pp$ such that $\proj{\CG\T\pp}\pp=\T$. 

\item {\bf [Step~3]} \ For each type $\T$, we 
define a {\em characteristic process} $\CP\T$ typed by $\T$,  
which offers the series of interactions described by $\T$.

\item {\bf [Step~4]} \ We prove that if $\T\nsubt \T'$, then $\stuck{\pa\pp\CP{\T}\pc\prod\limits_{1\leq i\leq n} \pa{\pp_i}{\CP{\T_i}}}$, where $\participant{\T'}=\set{\pp_1,\ldots,\pp_n}$,  and $\T_i=\proj {\CG{\T'}\pp} {\pp_i}$ for $1\leq i\leq n$. Hence we achieve completeness by choosing $\PP=\CP\T$ and 
$\N=\prod\limits_{1\leq i\leq n} \pa{\pp_i}{\CP{\T_i}}$ 
in the definition of preciseness (Definition~\ref{def:preciseness}). 
\end{myitemize}

\noindent \myparagraph{Negation of subtyping} 
\begin{mytable}{}
\centerline{$
\begin{array}{c}
  \inferrule[\rulename{nsub-endL}]
  {\T \neq \tend} {\T \nsubt \tend}{}
\qquad 
\inferrule[\rulename{nsub-endR}]
  {\T \neq \tend} {\tend \nsubt \T}{}
\qquad
\inferrule[\rulename{nsub-diff-part}]
{\pp \neq \pq \qquad \dag,\ddag \in \{?,!\}}
{\tdag\pp{\ell_1}{\S_1}.  \T_1 \nsubt  \tddag\pq{\ell_2}{\S_2}. \T_2} 
\qquad
\\\\
\inferrule[\rulename{nsub-out-in}]
{}
{\tout\pp{\ell_1}{\S_1}.  \T_1 \nsubt  \tin\pp{\ell_2}{\S_2}. \T_2} 
\qquad
\inferrule[\rulename{nsub-in-out}]
{}
{\tin \pp {\ell_1}{\S_1}.  \T_1 \nsubt  \tout\pp{\ell_2}{\S_2}. \T_2} 
\\\\
\inferrule[\rulename{nsub-in-in}]
{\ell_1 \neq \ell_2 \;\;  \text{or} \; \S_2 \not\subs \S_1 \;\;\text{or} \;\;\T_1\nsubt \T_2}
{\tin\pp{\ell_1}{\S_1}.  \T_1 \nsubt  \tin\pp{\ell_2}{\S_2}. \T_2} 
\qquad
\inferrule[\rulename{nsub-out-out}]
{\ell_1 \neq \ell_2 \;\;\makebox{or} \;\;\S_1 \not\subs \S_2\;\;\text{or}\;\; \T_1\nsubt \T_2}
{\tout\pp{\ell_1}{\S_1}.  \T_1 \nsubt  \tout\pp{\ell_2}{\S_2}. \T_2} 
\\\\
\inferrule[\rulename{nsub-intR}]
{ \T \nsubt \T_1 \text{ or } \T\nsubt \T_2}
{\T \nsubt \T_1 \wedge \T_2}
\quad
\inferrule[\rulename{nsub-uniL}]
{\T_1 \nsubt \T \text{ or }\T_2\nsubt \T}
{\T_1 \vee \T_2 \nsubt \T}
\quad
\inferrule[\rulename{nsub-intL-uniR}]
{\forall i\in I ~\forall j\in J ~\T_i \nsubt \T'_j}
{\bigwedge_{ i\in I}\T_i\nsubt \bigvee_{ j\in J}\T'_j}
\end{array}
$}
\caption{\label{tab:subtyping_negation}Negation of subtyping}
\end{mytable}
Table~\ref{tab:subtyping_negation} gives the 
negation of subtyping, which uses the negation of subsorting $\not\subs$ defined as expected. These rules say that a type different from $\tend$ cannot be compared to $\tend$, two input or output types with different participants, or different labels, or with sorts or continuations which do not match, cannot be compared. The rules in the last line just 
take into account the set theoretic properties of intersection and union. One can show that either $\T\subt \T'$ or $\T\nsubt\T'$ holds for two arbitrary types $\T,\T'$.

\begin{mylemmaA}
  $\T\nsubt\T'$ is the negation of $\T\subt \T'$. 
\end{mylemmaA}
\begin{proof} 
If $\T \nsubt \T'$, then we can show
$\T \not\subt \T'$ by induction on the
derivation of $\T \nsubt \T'$. %
We develop just two cases (the others are similar):
\begin{myitemize}
\item%
  base case 
  \rulename{nsub-diff-part}.\quad 
  Then, %
  $\T = \tdag\pp{\ell_1}{\S_1}.  \T_1$ %
  and %
  $\T' = \tddag\pq{\ell_2}{\S_2}. \T_2$ with $\pp \neq \pq$ and $\dag,\ddag \in \{?,!\}$. %
  We can verify that $\T$ and $\T'$ do not
  match %
  the conclusion of \rulename{sub-end}, nor \rulename{sub-in}, nor
  \rulename{sub-out} %
  --- hence, we conclude $\T \not\subt \T'$;
\item%
  inductive case \rulename{nsub-intL-uniR}.\quad
  Then, %
  $\T = \bigwedge_{ i\in I}\T_i$ %
  and %
  $\T' = \bigvee_{ j\in J}\T_j'$; %
  moreover, %
  \mbox{$\forall i\in I ~\forall j\in J:$} $\T_i \nsubt \T_j'$ %
  --- and thus, by the induction hypothesis, %
  $\T_i \not\subt \T'_j$. %
  We now notice that $\T \subt \T'$ %
  could only possibly hold by rule \rulename{sub-in} when $J$ is a singleton and 
  by rule \rulename{sub-out} when $I$ is a singleton%
  --- but, since $\T_i \not\subt \T'_j$, %
  at least one of the coinductive premises of such rules is 
  not satisfied. %
  Hence, we conclude $\T \not\subt \T'$.
\end{myitemize}
Vice versa, assume $\T \not\subt \T'$: %
if we try to apply the subtyping rules to show
$\T \subt \T'$, %
we will ``fail'' after $n$ derivation steps, %
by finding two types $\T_1, \T_2$ %
whose syntactic shapes do \emph{not} match the conclusion of
\rulename{sub-end}, nor \rulename{sub-in}, nor \rulename{sub-out}. %
We prove $\T \nsubt \T'$ by induction on $n$:
\begin{myitemize}
\item%
  base case $n=0$.\quad %
  The derivation ``fails'' immediately, %
  i.e.~$\T_1=\T$ and
  $\T_2=\T'$. %
  By cases on the possible shapes of $\T$ and $\T'$, %
  we obtain %
  $\T \nsubt \T'$ %
  by one of the rules %
  \rulename{nsub-endL}, \rulename{nsub-endR}, \rulename{nsub-diff-part},
  \rulename{nsub-out-in}, \rulename{nsub-in-out},
  \rulename{nsub-in-in}, \rulename{nsub-out-out};
\item%
  inductive case $n=m+1$.\quad %
  The shapes of $\T,\T'$ match the conclusion of
 \rulename{sub-in} (resp.~\rulename{sub-out}), %
  but there is some coinductive premise $\T_1 \subt \T_2$ %
  whose sub-derivation ``fails'' after $m$ steps. %
  By the induction hypothesis, %
  we have $\T_1 \nsubt \T_2$: %
  therefore, we can derive $\T \nsubt \T'$
  by one of the rules \rulename{nsub-in-in} or \rulename{nsub-intR} %
  (or~\rulename{nsub-out-out} or \rulename{nsub-uniL}) or \rulename{nsub-intL-uniR}.
\end{myitemize}
\end{proof}

\noindent \myparagraph{Characteristic global types} 
\begin{mytable}{}
\centerline{$
\begin{array}{c}
 \CGZ{\bigwedge_{i\in I}\tin{\pp_{j_0}}{\ell_i}{\S_i}. \T_i}\pp{\set{\pp_j}_{1\leq j\leq n}}=\Gvti {\pp_{j_0}}\pp \ell \S {\G^{j_0}}\\
\CGZ{\bigvee_{i\in I}\tout{\pp_{j_0}}{\ell_i}{\S_i}.\T_i}\pp{\set{\pp_j}_{1\leq j\leq n}}=\Gvti \pp{\pp_{j_0}} \ell \S {\G^{j_0}}\\
 \CGZ{\mu\ty. \T}\pp{\set{\pp_j}_{1\leq j\leq n}}=\mu\ty. \CGZ{\T}\pp{\set{\pp_j}_{1\leq j\leq n}}\\\CGZ\ty\pp{\set{\pp_j}_{1\leq j\leq n}}=\ty \qquad\qquad \CGZ\tend\pp{\set{\pp_j}_{1\leq j\leq n}}=\tend 
\end{array}$\\}
\centerline{$
\begin{array}{lll}
\G_i^{j_0}&=&\pp_{j_0}\to\pp_{j_0+1}:\ell_i(\tbool).\ldots\pp_{n-1}\to\pp_n:\ell_i(\tbool).\pp_n\to\pp_1:\ell_i(\tbool).\\&&\pp_1\to\pp_2:\ell_i(\tbool).\ldots.\pp_{j_0-1}\to\pp_{j_0}:\ell_i(\tbool).\CGZ{\T_i}\pp{\set{\pp_j}_{1\leq j\leq n}}\end{array}
$}\caption{The function $\CGZ{\T}\pp{\set{\pp_j}_{1\leq j\leq n}}$.}\label{cgtd} \end{mytable}
The characteristic global type $\CG{\T}\pp$ of the type $\T$ for the participant $\pp$ describes  the communications between $\pp$ and all participants in $\participant\T$ following $\T$. In fact after each communication involving $\pp$ and some $\q\in\participant\T$, $\q$ starts a cyclic communication involving all participants in $\participant\T$ both as receivers and senders. 
This is needed for getting both a projectable global type and a stuck session, see the proof of Theorem~\ref{prec} and Examples~\ref{e1} and~\ref{e2}. More precisely, 
we define the characteristic global type $\CG{\T}\pp$ of the type $\T$ for the participant $\pp\not\in\participant\T$ as $\CG{\T}\pp=\CGZ{\T}\pp{\participant\T}$, where $\CGZ{\T}\pp{\set{\pp_j}_{1\leq j\leq n}}$ is given in Table~\ref{cgtd}. 
\begin{myexampleC}\label{e1}
Some characteristic global types are projectable thanks to the cyclic communication. Take for example $\T=\tout\q{\ell_1}\tnat.\tin\pr{\ell_2}\tint.\tend\vee\tout\q{\ell_3}\tint.\tend.$ Without the cyclic communication we would get the global type $\G=\pp\to\q:\set{\ell_1(\tnat).\pr\to\pp:\ell_2(\tint).\tend,\ell_3(\tint).\tend}$ and $\proj{\G}\pr=\tout\pp{\ell_2}\tint.\tend \I\tend$ is undefined. Instead
\myformula{\begin{array}{l}
\CG{\T}\pp=\pp\to\q:\{\ell_1(\tnat). \q\to\pr: \ell_1(\tbool). \pr\to\q: \ell_1(\tbool).\\
\phantom{\CG{\T}\pp=\pp\to\q:\{}\pr\to\pp:\ell_2(\tint). \pr\to\q:\ell_2(\tbool).\q\to\pr:\ell_2(\tbool).\tend,\\
\phantom{\CG{\T}\pp=\pp\to\q:\{}\ell_3(\tint). \q\to\pr: \ell_3(\tbool). \pr\to\q: \ell_3(\tbool).\tend\}\\[3pt]
\proj{\CG{\T}\pp}\pr=\tin\q{\ell_1}\tbool.\tout\q{\ell_1}\tbool.\tout\pp{\ell_2}\tint.\tout\q{\ell_2}\tbool.\tin\q{\ell_2}\tbool.\tend\wedge \\
\phantom{\proj{\CG{\T}\pp}\pr= \;}\tin\q{\ell_3}\tbool.\tout\q{\ell_3}\tbool.\tend\end{array}}
\end{myexampleC}
It is easy to verify that $\proj{\CG{\T}\pp}\pp=\T$ and $\proj{\CG{\T}\pp}\q$ is defined for all $\q\in\participant\T$ by induction on the definition of characteristic global types. 

%

\noindent \myparagraph{Characteristic processes}
We define the characteristic process $\CP{\T}$ of the type $\T$ by using the operators ${\tt succ}$, ${\tt neg}$,  and $\neg$ to check if the received values are of the right sort and exploiting the correspondence between external choices and intersections, conditionals and unions. Conditionals also allow the evaluation of expressions which can be stuck. The definition of $\CP{\T}$ by induction on $\T$ is given in Table~\ref{cpt}.
\begin{mytable}{}
  \centerline{$
\CP{\T} = 
  \left\{
  \begin{array}{ll}
  \procin  \pp {\ell(\x)}\cond{\fsucc\x>0}{\CP{\T'}}{\CP{\T'}}            & \text{ if }\T=\pp?\ell(\tnat).\T', \\
  \procin  \pp {\ell(\x)}\cond{\fsqrt\x>0}{\CP{\T'}}{\CP{\T'}}            & \text{ if }\T=\pp?\ell(\tint).\T', \\
  \procin  \pp {\ell(\x)}\cond{\neg\x}{\CP{\T'}}{\CP{\T'}}            & \text{ if }\T=\pp?\ell(\tbool).\T', \\
    \procout \pp {\ell(5)} {\CP{\T'}}                                     & \text{ if }\T=\pp!\ell(\tnat).\T',\\
  \procout \pp {\ell(-5)} {\CP{\T'}}                                   & \text{ if }\T=\pp!\ell(\tint).\T',\\
  \procout \pp {\ell(\true)} {\CP{\T'}}                                & \text{ if }\T=\pp!\ell(\tbool).\T',\\
  \CP{\T_1}\external\CP{\T_2}                                           & \text{ if }\T=\T_1\texternal \T_2, \\
     \cond{\true\oplus\false} {\CP{\T_1}}   {\CP{\T_2}}                                                          & \text{ if }\T=\T_1\tinternal \T_2, \\
      \mu X_\ty.\CP{\T'}                                                                & \text{ if }\T=\mu\ty.\T', \\
     X_\ty                                                                & \text{ if }\T=\ty, \\
      \inact                                                                & \text{ if }\T=\tend. 
  \end{array}
  \right.
$}
\caption{Characteristic processes}\label{cpt}
\end{mytable}
 By induction on the structure of $\CP\T$ it is easy to verify that $\vdash \CP\T:\T$.

\smallskip

We have now all the necessary machinery to show operational preciseness of subtyping.

\begin{mytheorem}{Preciseness}\label{prec}
  The synchronous multiparty session subtyping is operationally precise.
\end{mytheorem}

\begin{proof} We only need to show completeness of the synchronous multiparty session subtyping.\\
Let  $\T\subt\T'$ and $\pp\not\in\participant{\T'}=\set{\pp_i}_{1\leq i\leq n}$ and $\G=\CG{\T'}\pp$ and $\T_i=\proj \G {\pp_i}$ for $1\leq i\leq n.$\\ Then $\vdash\Q:\T'$ implies \mbox{$\vdash\pa\pp\Q\pc\prod\limits_{1\leq i\leq n} \pa{\pp_i}{\CP{\T_i}}$} by rule [\rln{t-sess}]. 
We show that  \myformula{\stuck{\pa\pp\CP{\T}\pc\prod\limits_{1\leq i\leq n} \pa{\pp_i}{\CP{\T_i}}}.} The proof is by induction on the definition of $\nsubt$. We only consider some interesting cases.

\medskip

\inferrule[\rulename{nsub-diff-part}]
{\q \neq \pp_h \qquad \dag,\ddag \in \{?,!\}}
{\tdag\q{\ell}{\S}.  \T_0 \nsubt  \tddag{\pp_{h}}{\ell'}{\S'}. \T_0'} 

\smallskip

\noindent
By definition $\CP{\T}=\q\dag\ell(\e).\PP$ for suitable $\e,\PP$. If $\q\not\in\set{\pp_i}_{1\leq i\leq n}$, then \myformulaC{\stuck{\pa\pp\CP{\T}\pc\prod\limits_{1\leq i\leq n} \pa{\pp_i}{\CP{\T_i}}},} since $\CP{\T}$ will never communicate.\\
Otherwise let $\q=\pp_j$ with $1\leq j\leq n$ and $j\not=h$. By construction $\CP{\T_h}=\pp\dual\ddag\ell'(\e_h).\PP_h$, where $\dual\ddag=\begin{cases}
  ?    & \text{if }\ddag=!\\
   !   & \text{if }\ddag=?
\end{cases}$, and $\CP{\T_k}=\procin{\pp_{f(k)}}{\ell'(\x)}{\PP_k}$, where $f(k)=\begin{cases}
k-1      & \text{if }k>1 \\
 n     & \text{if } k=1
\end{cases}$ for $1\leq k\leq n$ and $k\not=h$.  
Therefore $\pa\pp\CP{\T}\pc\prod\limits_{1\leq i\leq n} \pa{\pp_i}{\CP{\T_i}}$ cannot reduce. 

\medskip

\inferrule[\rulename{nsub-in-in}]
{\ell_1 \neq \ell_2 \;\;  \text{or} \; \S_2 \not\subs \S_1 \;\;\text{or} \;\;\T_1\nsubt \T_2}
{\tin{\pp_h}{\ell_1}{\S_1}.  \T_1 \nsubt  \tin{\pp_h}{\ell_2}{\S_2}. \T_2}

\smallskip

\noindent
A paradigmatic case is $\ell_1=\ell_2=\ell$, $\S_1=\tnat$, $\S_2=\tint$,  $\T_1=\T_2=\tend$. By definition $\participant{\T'}=\set{\pp_h}$ and $\CP{\T}=\procin  {\pp_h} {\ell(\x)}\cond{\fsucc\x>0}{\inact}{\inact}$   and $\CP{\T_h}=\procout \pp {\ell(-5)} {\inact}$. Therefore $\pa\pp\CP{\T}\pc\CP{\T_h}$ reduces
to $\pa\pp\cond{\fsucc{-5}>0}{\inact}{\inact}$, which is stuck. 

\medskip

\inferrule[\rulename{nsub-intR}]
{ \T \nsubt \T'_1 \text{ or } \T\nsubt \T'_2}
{\T \nsubt \T'_1 \wedge \T'_2}

\smallskip

\noindent 
By definition $\T'_1$ and $\T'_2$ must be intersections of inputs with the same sender, let it be $\pp_h$. Let $\G_1=\CG{\T_1'}\pp$, $\G_2=\CG{\T_2'}\pp$, 
$\PP^{(1)}_h=\CP{\proj {\G_1} {\pp_h}}$, $\PP^{(2)}_h=\CP{\proj {\G_2} {\pp_h}}$. Then by construction \myformula{\PP_h=\CP{\proj {\CG{\T_1'\wedge \T'_2}\pp} {\pp_h}}= \cond{\true\oplus\false}{\PP^{(1)}_h}{\PP^{(2)}_h}.} This implies that $\pa\pp\CP{\T}\pc\prod\limits_{1\leq i\leq n} \pa{\pp_i}{\CP{\T_i}}$ reduces to both $\pa\pp\CP{\T}\pc\pa{\pp_h}{\PP^{(1)}_h}\pc\prod\limits_{1\leq i\not=h\leq n} \pa{\pp_i}{\CP{\T_i}}$ and $\pa\pp\CP{\T}\pc\pa{\pp_h}{\PP^{(2)}_h}\pc\prod\limits_{1\leq i\not=h\leq n} \pa{\pp_i}{\CP{\T_i}}$. By induction either $\pa\pp\CP{\T}\pc\pa{\pp_h}{\PP^{(1)}_h}\pc\prod\limits_{1\leq i\not=h\leq n} \pa{\pp_i}{\CP{\T_i}}$ or $\pa\pp\CP{\T}\pc\pa{\pp_h}{\PP^{(2)}_h}\pc\prod\limits_{1\leq i\not=h\leq n} \pa{\pp_i}{\CP{\T_i}}$ is stuck, and therefore also $\pa\pp\CP{\T}\pc\prod\limits_{1\leq i\leq n} \pa{\pp_i}{\CP{\T_i}}$ is stuck.

\medskip

\inferrule[\rulename{nsub-uniL}]
{\T'_1 \nsubt \T \text{ or }\T'_2\nsubt \T}
{\T'_1 \vee \T'_2 \nsubt \T}

\smallskip

\noindent 
By definition $\T'_1$ and $\T'_2$ must be unions of outputs with the same receiver, let it be $\pp_h$. By definition $\CP{\T'_1 \vee \T'_2}= \cond{\true\oplus\false} {\CP{\T'_1}}   {\CP{\T'_2}}$. Then $\pa\pp\CP{\T'_1 \vee \T'_2}\pc\prod\limits_{1\leq i\leq n} \pa{\pp_i}{\CP{\T_i}}$ reduces to both $\pa\pp\CP{\T'_1}\pc\prod\limits_{1\leq i\leq n} \pa{\pp_i}{\CP{\T_i}}$ and $\pa\pp\CP{\T'_2}\pc\prod\limits_{1\leq i\leq n} \pa{\pp_i}{\CP{\T_i}}$. By induction\\ \centerline{either $\pa\pp\CP{\T'_1}\pc\prod\limits_{1\leq i\leq n} \pa{\pp_i}{\CP{\T_i}}$ or $\pa\pp\CP{\T'_2}\pc\prod\limits_{1\leq i\leq n} \pa{\pp_i}{\CP{\T_i}}$ is stuck,} and therefore   
$\pa\pp\CP{\T'_1 \vee \T'_2}\pc\prod\limits_{1\leq i\leq n} \pa{\pp_i}{\CP{\T_i}}$ 
is stuck too.

\medskip

\inferrule[\rulename{nsub-intL-uniR}]
{\forall l\in L ~\forall j\in J ~\T'_l \nsubt \T''_j}
{\bigwedge_{ l\in L}\T'_l\nsubt \bigvee_{ j\in J}\T''_j}

\smallskip

\noindent
If $L$ and $J$ are both singleton sets it is immediate by induction.

\noindent
If  $L$ and $J$ both contain more than one index, then by definition we can assume (without loss of generality) that $\T'_l$ for $l\in L$ are input types with the same sender, let it be $\pp_h$, and $\T''_j$ for $j\in J$ are output types with the same receiver, let it be $\pp_k$. By definition  $\CP{\T}=\sum\limits_{l\in L}\procin{\pp_h}{\ell_l(\x)}{\PP_l'}$, and $\CP{\T_k}=\sum\limits_{j\in J}\procin{\pp}{\ell_j(\x)}{\PP_j''}$ and $\CP{\T_u}=\pp_{f(u)}?\ell_j(x).\PP_u$, where $f$ is as in the case of rule \rulename{nsub-diff-part}, for \mbox{$1\leq u\leq n$} and $u\not=k$.
Therefore $\pa\pp\CP{\T}\pc\prod\limits_{1\leq i\leq n} \pa{\pp_i}{\CP{\T_i}}$ cannot reduce.

\noindent
Let $L$ contains more than one index and $J$ be a singleton set.
By definition  $\CP{\T}=\sum\limits_{l\in L}{\PP_l'}$, where ${\PP_l'}=\CP{\T'_l}$ for $l\in L$. Let us assume ad absurdum that \mbox{$\pa\pp\CP{\T}\pc\prod\limits_{1\leq i\leq n} \pa{\pp_i}{\CP{\T_i}}$} is not stuck. Then there must be $l_0\in L$ such that \mbox{$\pa\pp{\PP_{l_0}'}\pc\prod\limits_{1\leq i\leq n} \pa{\pp_i}{\CP{\T_i}}$} is not stuck, contradicting the hypothesis.

\noindent
If $L$ is a singleton set and $J$ contains more than one index, then $\T''_j$ for $j\in J$ must be unions of outputs with the same receiver, let it be $\pp_h$. 
Let $\G_j=\CG{\T_j''}\pp$ and $\PP^{(j)}_h=\CP{\proj {\G_j} {\pp_h}}$. Then $\PP_h=\CP{\proj {\CG{\bigvee_{ j\in J}\T''_j}\pp}{\pp_h}}=\sum\limits_{j\in J}{\PP^{(j)}_h}$. Let us assume ad absurdum that $\pa\pp\CP{\T}\pc\prod\limits_{1\leq i\leq n} \pa{\pp_i}{\CP{\T_i}}$ is not stuck. In this case there must be $j_0\in J$ such that 
$\pa\pp{\CP{\T}}\pc\pa{\pp_h}{\PP^{(j_0)}_h}
\pc\prod\limits_{1\leq i\not=h\leq n} \pa{\pp_i}{\CP{\T_i}}$ is not stuck, contradicting the hypothesis.
\end{proof}
\begin{myexample}\label{e2}
An example showing the utility of the cyclic communication in the definition of characteristic global types is $\T=\tout{\pp_1}{\ell_1}{\tnat}.\tout{\pp_2}{\ell_2}{\tnat}.\tend$ and $\T'=\tout{\pp_2}{\ell_2}{\tnat}.\tout{\pp_1}{\ell_1}{\tnat}.\tend$. In fact without the cyclic communication the characteristic global type of $\T'$ would be 
\myformula{\G=\pp\to\pp_2:\ell_2(\tnat).\pp\to\pp_1:\ell_1(\tnat).\tend} and then $\M=\pa{\pp_1}{\CP{\proj\G{\pp_1}}}\pc\pa{\pp_2}{\CP{\proj\G{\pp_2}}}=\pa{\pp_1}{\procin{\pp}{\ell_1(\x)}{\inact}}\pc\pa{\pp_2}{\procin{\pp}{\ell_2(\x)}{\inact}}$. Being $\CP\T=\procout{\pp_1}{\ell_1(5)}{\procout{\pp_2}{\ell_2(5)}\inact}$, the session $\pa\pp{\CP\T}\pc\M$ reduces to $\pa\pp\inact$. Instead
\myformula{\begin{array}{lll}\CG{\T'}\pp&=&\pp\to\pp_2:\ell_2(\tnat).\pp_2\to\pp_1:\ell_2(\tbool).\pp_1\to\pp_2:\ell_2(\tbool).\\
&&\pp\to\pp_1:\ell_1(\tnat).\pp_1\to\pp_2:\ell_1(\tbool).\pp_2\to\pp_1:\ell_1(\tbool).\tend,\end{array}} which implies $\CP{\proj{\CG{\T'}\pp}{\pp_1}}=\procin{\pp_2}{\ell_2(\x)}{\ldots}$ and $\CP{\proj{\CG{\T'}\pp}{\pp_2}}=\procin{\pp}{\ell_2(\x)}{\ldots}.$ It is then easy to verify that $\pa\pp{\CP\T}\pc \pa{\pp_1}{\CP{\proj{\CG{\T'}\pp}{\pp_1}}}\pc\pa{\pp_2}{\CP{\proj{\CG{\T'}\pp}{\pp_2}}}$ is stuck. 
\end{myexample}


\mysection{Operational Preciseness at Work}\label{ex}
Consider  a multiparty session with four participants: client $({\tt cl}),$ adder $({\tt add})$, increment $({\tt inc}),$ and decrement $({\tt dec})$
  \[\pa{\tt cl}{\PP_{\tt cl}} \sep  \pa{\tt add}{\PP_{\tt add}} \sep  \pa{\tt inc}{\PP_{\tt inc}} \sep  \pa{\tt dec}{\PP_{\tt dec}}.
  \]
Client sends two natural numbers to adder and expects the integer result of summation. Adder receives the two numbers and   sum them  by successively increasing the first one by 1  (done by {\tt inc}) and decreasing the second one by 1 (done by {\tt dec}). If the second summand equals 0, the first summand gives the required sum. 
Processes modelling this behaviour are the following:
\myformula{
\begin{array}{lcl}
\PP_{\tt cl}     & = &  \procout{\tt add}{\ell_1(5)}\procout{\tt add}{\ell_2(4)}\procin{\tt add}{\ell_3(x)}\inact\\
\PP_{\tt add}  & = & \procin{\tt cl}{\ell_1(y_1)}\procin{\tt cl}{\ell_2(y_2)}\mu X. \kf{if}~y_2=0~\kf{then}~
                       \procout{\tt inc}{\ell_4(\true)}\procout{\tt dec}{\ell_4(true)}
                       \procout{\tt cl}{\ell_3(y_1)}{\tend}\\ 
                       && \phantom{\procin{\tt cl}{\ell_1(y_1)}\procin{\tt cl}{\ell_2(y_2)}\mu X. \kf{if}~y_2=0~}\kf{else}~
                       \procout{\tt inc}{\ell_5(y_1)}\procin{\tt inc}{\ell_6(y_1)}\procout{\tt dec}{\ell_7(y_2)}\procin{\tt dec}{\ell_8(y_2)}X \\
\PP_{\tt inc}  & = & \mu X.\procin{\tt add}{\ell_4(\tbool)}\tend \external 
\procin{\tt add}{\ell_5(y)}\procout{\tt add}{\ell_6(y+1)}X \\
\PP_{\tt dec} & = & \mu X.\procin{\tt add}{\ell_4(\tbool)}\tend \external 
\procin{\tt add}{\ell_7(y)}\procout{\tt add}{\ell_8(y-1)}X. \\
\end{array}
}
We can extend addition to integers by changing the process  $\PP_{\tt add}$ as follows:
\myformula{
\begin{array}{lcl}
\PP'_{\tt add} & = &  \procin{\tt cl}{\ell_1(y_1)}\procin{\tt cl}{\ell_2(y_2)}\mu X. \kf{if}~y_2=0~\kf{then}~
                        \procout{\tt inc}{\ell_4(\true)}\procout{\tt dec}{\ell_4(true)} 
                         \procout{\tt cl}{\ell_3(y_1)}{\tend} \\
                        &&
                        \hfill\kf{else}~\kf{if}~y_2>0~\kf{then}~
                       \procout{\tt inc}{\ell_5(y_1)}\procin{\tt inc}{\ell_6(y_1)}\procout{\tt dec}{\ell_7(y_2)}\procin{\tt dec}{\ell_8(y_2)}X \\
                       &&\hfill\kf{else}~\procout{\tt inc}{\ell_5(y_2)}\procin{\tt inc}{\ell_6(y_2)}\procout{\tt dec}{\ell_7(y_1)}\procin{\tt dec}{\ell_8(y_1)}X.
\end{array}
}
 Process $\PP'_{\tt add}$ additionally checks if the second summand is positive. If it is not, the sum is calculated by successively increasing the second summand by 1 and decreasing the first summand by 1. 
The new multiparty session follows the global protocol
  \myformula{
    \begin{array}{l}
      {\tt cl} \to {\tt add}: \ell_1(\tint).{\tt cl} \to {\tt add}: \ell_2(\tint).\mu{\bf t}.{\tt add}\to {\tt inc}:\{\\
        \ell_4(\tbool):{\tt add}\to{\tt dec}: \ell_4(\tbool).{\tt add}\to{\tt cl}:\ell_3(\tint).\tend, \\
       \ell_5(\tint). {\tt inc}\to {\tt add}:\ell_6(\tint).
       {\tt add}\to {\tt dec}:\ell_7(\tint).{\tt dec}\to {\tt add}:\ell_8(\tint).{\bf t}\}.
    \end{array}
 }
Operational soundness of the subtyping guarantees that the summation of natural numbers will be safe after this change, as for $\tnat\subt\tint$  we have
 \myformula{
     \tout{\tt add}{\ell_1}{\tnat}.\tout{\tt add}{\ell_2}{\tnat}.\tin{\tt add}{\ell_3}{\tint}.\tend \subt  \tout{\tt add}{\ell_1}{\tint}.\tout{\tt add}{\ell_2}{\tint}.\tin{\tt add}{\ell_3}{\tint}.\tend.
}
On the other hand, by operational completeness we cannot swap sending of messages with different labels, e.g.
 \myformula{
  \T = \tout{\tt add}{\ell_1}{\tint}.\tout{\tt add}{\ell_2}{\tint}.\tend \not \subt \tout{\tt add}{\ell_2}{\tint}.\tout{\tt add}{\ell_1}{\tint}.\tend = \T'.
}
We can construct processes $Q_{\tt cl} = \procout{\tt add}{\ell_1(5)}\procout{\tt add}{\ell_2(4)}\inact$ of type $\T$ and $Q'_{\tt cl} = \procout{\tt add}{\ell_2(4)}\procout{\tt add}{\ell_1(5)}\inact$ of type $\T'$  and a  multiparty session 
 \myformula{
  \M = {\tt add} \lhd \procin{\tt cl}{\ell_2(x)}\cond{\fneg{x}>0}{\procin{\tt cl}{\ell_1(x)}{\inact}}{\procin{\tt cl}{\ell_1(x)}{\inact}}
}
such that $\pa{{\tt cl}}{Q'_{\tt cl}} \sep \M$ is well typed, while $\pa{{\tt cl}}{Q_{\tt cl}} \sep \M$ is stuck, since
the multiparty session
 \myformula{
  {\tt cl} \lhd \procout{\tt add}{\ell_1(5)}\procout{\tt add}{\ell_2(4)}\inact \sep {\tt add} \lhd \procin{\tt cl}{\ell_2(x)}\cond{\fneg{x}>0}{\procin{\tt cl}{\ell_1(x)}{\inact}}{\procin{\tt cl}{\ell_1(x).\inact}}
 }
cannot reduce because of label mismatch.
\mysection{Denotational Preciseness}\label{sec:denotation}

In $\lambda$-calculus types are usually interpreted as subsets of the domains of $\lambda$-models~\cite{bcd83,H83}. {\em Denotational preciseness} of subtyping is then:
 \begin{myformula}{\T\subt \T' \ \text{if and only if}\ \dlsqb  \T \drsqb \subseteq\dlsqb  \T' \drsqb, }\end{myformula}
using $\dlsqb  \; \drsqb $ to denote type interpretation. 

In the present context let us interpret a session type $\T$ as the set of closed processes typed by $\T$, i.e.
\myformula{\dlsqb  \T \drsqb=\{\PP~ \mid ~\vdash \PP: \T\}}
We can then show that the subtyping is denotationally precise. The subsumption rule \rln{[t-sub]} gives the denotational soundness. Denotational completeness follows from the following key property of characteristic processes:
 \begin{myformula}{\vdash \CP\T :\T'\ \text{implies}\ \T\subt \T'.}\end{myformula}
If we could derive $\vdash\CP\T :\T'$ with $\T\not\subt\T'$, 
then the multiparty session
 \begin{myformula}{ \pa\pp\CP{\T}\pc\prod\limits_{1\leq i\leq n} \pa{\pp_i}{\CP{\T_i}},}\end{myformula}
where $\participant{\T'}=\set{\pp_i}_{1\leq i\leq n}$ and $\G=\CG{\T'}\pp$ and $\T_i=\proj \G {\pp_i}$ for $1\leq i\leq n$, could be typed. Theorem~\ref{prec} shows that this process is stuck, and this contradicts the soundness of the type system. We get the desired property, which implies denotational completeness, since if $\T\not\subt\T'$, then $\CP\T\in \dlsqb  \T \drsqb$, but $\CP\T\not\in \dlsqb  \T' \drsqb$. 
  \begin{mytheorem}{Denotational preciseness}\label{thm:dpreciseness}
The subtyping relations is denotationally precise. \end{mytheorem}
  
  \mysection{Conclusion}\label{conc}
  The preciseness result of this paper 
shows a rigorousness of 
the subtyping, which is 
implemented (as a default) in most of session-based programming languages 
and tools 
\cite{event,DemangeonH11,scribble10,
HNHYH13} 
for enlarging typability.

The main technical contribution is the definition of characteristic global types, see Section~\ref{sec:op}. 
  Given a session type $\T$ and a session participant $\pp$ which does not occur in $\T$,
the associated characteristic global type expresses the communications prescribed by $\T$ between $\pp$ and the participants in $\T$. After each communication involving  $\pp$,  the characteristic global type creates a cyclic communication between all participants in $\T$. Such a cyclic communication is essential to project the characteristic global type and to generate deadlock when the 
    the subtyping relation is extended.

The subtyping considered here is sound but not complete for asynchronous multiparty sessions~\cite{HYC08}, as shown in~\cite{mostrous_yoshida_honda_esop09}. We conjecture the completeness of the subtyping defined in~\cite{mostrous_yoshida_honda_esop09} for asynchronous multiparty sessions and we are working toward this proof.

\vspace{-10pt}
\paragraph{Acknowledgments.} 
We are grateful to the anonymous reviewers for their useful remarks.

\vspace{-7pt}
\bibliographystyle{eptcs}
\bibliography{paper3_references}

\end{document}